\newtheorem{thm}{Theorem}
\newtheorem{claim}[thm]{Claim}
\newtheorem{prop}[thm]{Proposition}
\newtheorem{defn}[thm]{Definition}
\newtheorem{lemma}[thm]{Lemma}
\newcommand{\comment}[1]{}
\def\A{\mathcal{A}}
\def\tmo{\mathtt{the.member}}
\begin{document}

\title{Efficient Divide-and-Conquer Implementations Of Symmetric FSAs}

\author{David A.\ G.\ Pritchard\email{daveagp@gmail.com}}

\institute{Department of Combinatorics \& Optimization, University
of Waterloo, Canada}

\maketitle

\begin{abstract}
A deterministic finite-state automaton (FSA) is an abstract sequential machine
that reads the symbols comprising an input word one at a time. An
FSA is \emph{symmetric} if its output is independent of the order in
which the input symbols are read, i.e., if the output is invariant
under permutations of the input. We show how to convert a symmetric
FSA $\A$ into an automaton-like divide-and-conquer process whose
intermediate results are no larger than the size of $\A$'s memory.
In comparison, a similar result for general FSA's has been long
known via functional composition, but entails an exponential
increase in memory size. The new result has applications to parallel
processing and symmetric FSA networks.
\end{abstract}

\keywords{divide and conquer, FSA, network, parallel processing,
PRAM, sequential automaton, symmetry}

\section{Introduction}
One of the simplest models of computation is the \emph{deterministic finite state
automaton} (FSA). Although FSAs are often considered to act as
solitary computing devices (e.g., in the classical string matching
algorithm of Knuth, Morris, and Pratt \cite{KMP77}) they can also be
connected together to form a computing network (e.g., in cellular
automata and the models of \cite{AA+06, Milgram75}).

A \emph{symmetric} automaton is one that will produce the
same output even if its inputs are permuted. Symmetric FSAs are natural building blocks for fault-tolerant computation networks. In previous work with Vempala~\cite{PV06} we showed there are symmetric FSAs implementing fault-tolerant randomized algorithms for the following tasks: implicit
2-approximate census (via an algorithm due to
Milgram~\cite{Milgram75}), network search (via breadth-first search
and greedy traversal), basic connectivity problems (finding all
bridges), and leader election.

Precisely, the \emph{finite-state symmetric graph automaton} (FSSGA)
model introduced in \cite{PV06} is that a copy of the same symmetric
FSA is placed at every node of a graph (network); when a node
``activates" to advance its state, it obtains one input symbol from
each neighbour without regard to order. In sum, FSSGA are like
symmetric cellular automata but generalized in the sense that the underlying
graph does not have to be regular.
Three models of
symmetric automata are given in \cite{PV06}: in the
\emph{sequential} model each node is a (sequential) FSA, in the
\emph{parallel} model each node uses divide-and-conquer on its
inputs (in a way that will be defined precisely later), and in the \emph{mod-thresh} model each node applies a
finite-size formula (analogous to a regular expression) to update
its state. One of the main results of \cite{PV06} is that these
three models are equivalent; e.g., for any symmetric FSA there
exists a divide-and-conquer process to compute the same function.
Unfortunately, for the particular construction given in \cite{PV06},
an exponential increase in the size of the state space is required.

More generally, an efficient way to simulate \emph{any} FSA with
divide-and-conquer has been known for decades. The basic technique
is sometimes called \emph{functional composition} as applied to
\emph{parallel prefix}. Ladner and Fischer used the technique in
1977 \cite{LF80} on the PRAM model of parallel computing; see also
\cite{MF95} for an implementation in mesh networks. The basic idea
is that for any single character $\sigma$, the transition of the FSA
on that character can be viewed as function $f_{\sigma}$ from the
FSA's state space back to itself, and the computation of the FSA on
a string $w = w_1w_2\dotsb w_k$ is essentially determined by the
composition of functions $f_w := f_{w_k}\circ\dotsb\circ f_{w_2}
\circ f_{w_1}.$ In turn, this composition problem lends itself
easily to divide-and-conquer: break the string into two parts $w =
uv$, compute the compositions $f_u$ and $f_v$ for the two parts, and
return $f_v \circ f_u$. Like the transformation of \cite{PV06} for
symmetric automata, the size of intermediate results increases
exponentially, since for a state space $Q$ there are $|Q|^{|Q|}$
functions from $Q$ to $Q$.

The main contribution of this paper is that for a \emph{symmetric}
FSA, no increase in the state space size is
necessary. We present the result (Theorem \ref{thm:main})
after introducing our
notation. The resulting small-state-space divide-and-conquer process
is applicable to the
PRAM setting, so e.g.\ for symmetric FSAs we are able to decrease the
working memory used by the divide-and-conquer approaches of \cite{LF80, MF95}.
For high-degree FSSGAs and the special case of symmetric cellular automata,
divide-and-conquer is a natural way for each node to read its neighbours'
states, as we will illustrate in Section \ref{sec:prelim}; our main
result permits such divide-and-conquer processes to be more
memory-efficient.

\section{Preliminaries}\label{sec:prelim}
We denote an FSA by the tuple $(\Sigma, Q, q_0, \{f_\sigma\}_{\sigma
\in \Sigma}, O, \beta)$. Here $\Sigma$ is a finite set called the
\emph{input alphabet}, $Q$ is a finite set called the \emph{state
space}, $q_0$ is an element of $Q$ called the \emph{initial state},
each $f_\sigma$ is a function from $Q$ to $Q$ called the
\emph{transition function of $\sigma$}, $O$ is a finite \emph{output
set}, and $\beta$ is an \emph{output function} from $Q$ to $O$.

\begin{defn}[FSA] An \emph{FSA} is any tuple $\A = (\Sigma, Q, q_0, \{f_\sigma\}_{\sigma \in \Sigma}, O, \beta)$ of the form described above.
\end{defn}

Let $\Sigma^*$ denote the set of all strings over $\Sigma$, and let
$f \circ g$ denote the functional composition of $f$ and $g$,
defined by $(f \circ g)(x) = f(g(x)).$ It is convenient to extend
the definition of $f$ to strings via functional composition. Namely,
for a string $w = w_1w_2\dotsb w_k,$ define
$$f_w := f_{w_k} \circ f_{w_{k-1}} \circ \dotsb \circ f_{w_2} \circ
f_{w_1},$$ and by convention, where $\lambda$ denotes the empty
string, let $f_\lambda$ be the identity function on $Q$. In
particular, we obtain the identity $f_{uv}(q) = f_v( f_u(q))$ for
any strings $u, v \in \Sigma^*$ and any $q \in Q$. Let $\Sigma^+$
denote the set of nonempty strings over $\Sigma$; the empty string
is excluded to agree with the divide-and-conquer model later on. Our
definition of $f_w$ affords a concise description of computation for
an FSA.

\begin{defn}[FSA computation] An FSA $\A = (\Sigma, Q, q_0, \{f_\sigma\}_{\sigma \in \Sigma}, O, \beta)$ \emph{computes} the function $\nu_\A : \Sigma^+ \to O$ defined by
$$\nu_\A(w) := \beta(f_w(q_0)).$$
\end{defn}

\comment{The FSA is said to \emph{compute} a function from
$\Sigma^*$ to $O$, where $\Sigma^*$ is the set of all strings over
$\Sigma$. Namely, on input $w_1w_2\dotsb w_k \in \Sigma^*$, the
output of this function is defined to be
$$\beta(f_{w_k}(f_{w_{k-1}}(\dotsb(f_{w_2}(f_{w_1}(q_0)))))).$$}
Note that the traditional model where the FSA accepts
or rejects strings depending on the final  state can be modeled by
setting $O = \{accept, reject\}$ and defining $\beta(q)=accept$ iff
$q$ is an accepting state. We use the multi-output version because
it is more natural in some settings, e.g., the FSSGA model.

We represent a divide-and-conquer automaton by a tuple $(\Sigma, Q,
\alpha, c, O, \beta)$. As before $\Sigma$ is the input alphabet, $Q$
is the state space, $O$ is the output set and $\beta$ is the output
function. Here $\alpha$ is an \emph{input function} from $\Sigma$ to
$Q$ and $c$ is a \emph{combining function} from $Q \times Q$ to $Q$.
Informally, the divide-and-conquer automaton runs according to the
following rules:
\begin{enumerate} \item apply $\alpha$ to all input characters
\item combine states arbitrarily using $c$ until only one
 state $q^*$ is left \item output
$\beta(q^*)$.\end{enumerate}
Our definition will require that the end result of the computation
is the same no matter how the arbitrary choices of combination are made.

To give our formal definition, we use a set-valued function $\chi$
that maps each nonempty string to a subset of $Q$ so that $q^* \in
\chi(w)$ iff, dividing inputs arbitrarily, the input $w$ could
produce $q^*$ as the final  state. We denote the length of $w$ by
$|w|$.
\begin{defn}[DCA] Let $\mathcal{A}'$ denote the tuple $(\Sigma, Q, \alpha, c, O, \beta)$
as described above. Define $\chi_{\A'}(w)$ for $w \in \Sigma^+$
recursively as follows: if $|w|=1$, say $w$ consists of the
character $\sigma$, then $\chi_{\A'}(w) := \{\alpha(\sigma)\};$
otherwise (for $|w| \geq 2$)
\begin{equation}
\chi_{\A'}(w) := \bigcup_{\substack{(u, v) : uv=w}}\{c(q^*_u, q^*_v)
\mid q^*_u \in \chi_{\A'}(u), q^*_v \in \chi_{\A'}(v)\}
\label{eq:dca}
\end{equation}
where $(u, v)$ ranges over all partitions of $w$ into two nonempty
substrings. We say that $\A'$ is a \emph{divide-and-conquer
automaton (DCA)} if for all $w \in \Sigma^+$,
\begin{equation}
|\{\beta(q^*) \mid q^* \in \chi_{\A'}(w)\}|=1. \label{eq:kobe}
\end{equation}
\end{defn}

The previous definition amounts to saying that the output of a
divide-and-conquer automaton should be well-defined regardless of
how the dividing is performed. For a singleton set $S$ let $\tmo(S)$
be a function that returns the element of $S$, i.e., it ``unwraps"
the set.

\begin{defn}[DCA computation]A DCA $\A' = (\Sigma, Q, \alpha, c, O, \beta)$ \emph{computes}
the function $\nu_{\A'} : \Sigma^+ \to O$ defined by
\begin{equation}
\nu_{\A'}(w) = \tmo(\{\beta(q) \mid q \in \chi_{\A'}(w)\}).\label{eq:qe}
\end{equation}
\end{defn}

Figure 1 illustrates how a node in an FSA-based computing network
could make use of the divide-and-conquer methodology. Specifically,
when reading the states of all neighbours
the node can process and combine inputs from
its neighbours in parallel rather than one-by-one. As a function of
the neighbourhood size $|\Gamma|$ (i.e.\ the degree)
the circuit depicted has depth $O(\log
|\Gamma|)$ and hence this approach would lead to efficient
 physical implementation for large neighbourhoods.

\begin{figure}[htb]
\begin{center} \leavevmode
\begin{pspicture}(-4,-4)(4,4)
\psset{unit=0.75} {\psset{doubleline=true} \cnodeput(0,0){v}{$v$} \cnodeput(-4,4){n1}{$n$}
\cnodeput(0,4){n2}{$n$} \cnodeput(4,4){n3}{$n$}
\cnodeput(-4,0){n4}{$n$} \cnodeput(4,0){n5}{$n$}
\cnodeput(-4,-4){n6}{$n$} \cnodeput(0,-4){n7}{$n$}
\cnodeput(4,-4){n8}{$n$}} \psset{arrows=->}
\rput(3,3){\rnode{A3}{\psframebox{$\alpha$}}} \ncline{n3}{A3}
\rput(1,3){\rnode{A2}{\psframebox{$\alpha$}}} \ncline{n2}{A2}
\rput(2,2){\rnode{C23}{\psframebox{$c$}}} \ncline{A2}{C23}
\ncline{A3}{C23} \rput(-3,3){\rnode{A1}{\psframebox{$\alpha$}}}
\ncline{n1}{A1} \rput(-3,1){\rnode{A4}{\psframebox{$\alpha$}}}
\ncline{n4}{A4} \rput(-2,2){\rnode{C14}{\psframebox{$c$}}}
\ncline{A1}{C14} \ncline{A4}{C14}
\rput(0,2){\rnode{Ct}{\psframebox{$c$}}} \ncline{C14}{Ct}
\ncline{C23}{Ct} \rput(-2,0){\rnode{C}{\psframebox{$c$}}}
\ncline{Ct}{C} \rput(-1,0){\rnode{B}{\psframebox{$\beta$}}}
\ncline{C}{B} \ncline{B}{v}
\rput(3,-1){\rnode{A5}{\psframebox{$\alpha$}}}
\rput(-3,-3){\rnode{A6}{\psframebox{$\alpha$}}}
\rput(-1,-3){\rnode{A7}{\psframebox{$\alpha$}}}
\rput(3,-3){\rnode{A8}{\psframebox{$\alpha$}}}
\rput(-2,-2){\rnode{C67}{\psframebox{$c$}}}
\rput(2,-2){\rnode{C58}{\psframebox{$c$}}}
\rput(0,-2){\rnode{Cb}{\psframebox{$c$}}} \ncline{n5}{A5}
\ncline{n6}{A6}\ncline{n7}{A7}\ncline{n8}{A8} \ncline{A5}{C58}
\ncline{A8}{C58} \ncline{A6}{C67} \ncline{A7}{C67} \ncline{C67}{Cb}
\ncline{C58}{Cb} \ncline{Cb}{C}
\end{pspicture}
\end{center}
\caption{An FSA in a network updates its state via
divide-and-conquer. The node $v$ is activating and its neighbours
are labeled $n$. The lines carry values from tail to head, and the
boxes apply functions, like in a circuit diagram. Each neighbour
supplies an input symbol and the divide-and-conquer process produces
an output symbol which is used by $v$ to update its state.}
\label{fig:example}
\end{figure}

We denote by $Q^Q$ the set of all functions from $Q$ to $Q$. We mentioned the following well-known (e.g., \cite{LF80}) result
earlier:
\begin{thm}\label{thm:compose}
Given any FSA $\A$, there is a DCA $\A'$ such that $\nu_\A =
\nu_{\A'},$ i.e., $\A$ and $\A'$ compute the same function.
\end{thm}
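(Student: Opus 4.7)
The plan is to use the classical functional-composition construction mentioned in the introduction, making the DCA's state space the set $Q^Q$ of self-maps of $\A$'s state space. Concretely, I would define $\A' = (\Sigma, Q^Q, \alpha, c, O, \beta')$ with $\alpha(\sigma) := f_\sigma$, combining function $c(g, h) := h \circ g$, and output function $\beta'(g) := \beta(g(q_0))$. The intuition is that each intermediate state of $\A'$ records the full transition function induced by some substring of the input, and that composing these functions in input order reconstructs $f_w$.

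The main technical step is to show by induction on $|w|$ that $\chi_{\A'}(w) = \{f_w\}$ for every $w \in \Sigma^+$. The base case $|w|=1$ is immediate from the definitions of $\chi_{\A'}$ and $\alpha$. For the inductive step, I would fix any partition $w = uv$ with $u, v \in \Sigma^+$; by the inductive hypothesis $\chi_{\A'}(u) = \{f_u\}$ and $\chi_{\A'}(v) = \{f_v\}$, so the contribution of this partition to the right-hand side of (\ref{eq:dca}) is $\{c(f_u, f_v)\} = \{f_v \circ f_u\} = \{f_{uv}\} = \{f_w\}$, using the identity $f_{uv} = f_v \circ f_u$ noted in Section~\ref{sec:prelim}. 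Taking the union over all partitions leaves the singleton $\{f_w\}$ unchanged.

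Once the singleton property is established, both of the remaining obligations fall out. First, (\ref{eq:kobe}) holds trivially because $\chi_{\A'}(w)$ is already a one-element set, so $\A'$ genuinely is a DCA. Second, by the definition of DCA computation, $\nu_{\A'}(w) = \beta'(f_w) = \beta(f_w(q_0)) = \nu_\A(w)$, which is exactly the claimed equality of computed functions.

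There is no real obstacle here; the only delicate point is to make sure the order of composition in $c$ matches the order implicit in $f_w = f_{w_k} \circ \dots \circ f_{w_1}$, so that $c(g,h) = h \circ g$ rather than $g \circ h$. The reason this proof does \emph{not} suffice for the main contribution of the paper is that $|Q^Q| = |Q|^{|Q|}$, an exponential blow-up in the state-space size; that inefficiency is precisely what Theorem~\ref{thm:main} will later overcome under the symmetry assumption.
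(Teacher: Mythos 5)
Your proposal is correct and is essentially identical to the paper's proof, which simply exhibits the DCA $\A' = (\Sigma, Q^Q, \sigma \mapsto f_\sigma, (f_1,f_2)\mapsto f_2\circ f_1, O, \beta)$; you merely spell out the inductive verification that $\chi_{\A'}(w)=\{f_w\}$, which the paper leaves implicit. Your explicit $\beta'(g):=\beta(g(q_0))$ is in fact a small improvement in precision over the paper's terse reuse of $\beta$, whose domain is formally $Q$ rather than $Q^Q$.
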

\begin{proof}
Define $\A' = (\Sigma, Q^Q, \sigma \mapsto f_\sigma, (f_1, f_2) \mapsto f_2
\circ f_1, O, \beta)$.
\end{proof}

Conversely, as was observed in \cite{PV06}, any divide-and-conquer
automaton can be easily rewritten in sequential form since a
sequential FSA can be thought of as conquering one input at a time.

The particular result we want to prove pertains only to symmetric
automata, which we now define formally.

\begin{defn} Suppose that $\A$ is an FSA or a DCA.
We say that $\A$ is \emph{symmetric} if for every $w \in \Sigma^+$
and every permutation $w'$ of $w,$ $\nu_\A(w)=\nu_\A(w').$
\end{defn}

The main result of the present paper is the following, which is a
more efficient version of Theorem \ref{thm:compose} for symmetric
FSA's.
\begin{thm}\label{thm:main}
Given any symmetric FSA $\A = (\Sigma, Q, q_0, f, O, \beta)$, there
is a DCA $\A' = (\Sigma, Q', \alpha, c, O, \beta')$ such that
$\nu_\A=\nu_{\A'}$  \emph{and $|Q'|\leq|Q|$}.
\end{thm}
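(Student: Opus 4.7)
My plan is to build $\A'$ as a Nerode-style quotient of $\A$. Define an equivalence relation on $Q$ by $q \sim q'$ iff $\beta(f_v(q)) = \beta(f_v(q'))$ for every $v \in \Sigma^*$, and set $Q' := \{f_w(q_0) : w \in \Sigma^+\}/{\sim}$, which clearly satisfies $|Q'| \leq |Q|$. The output map descends to $\beta'([q]) := \beta(q)$ (well-defined from the $v=\lambda$ case), and I set $\alpha(\sigma) := [f_\sigma(q_0)]$.

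Two closure properties of $\sim$ power the construction. First, $\sim$ is a congruence for every transition $f_v$: if $q \sim q'$, then $\beta(f_u(f_v(q))) = \beta(f_{vu}(q)) = \beta(f_{vu}(q')) = \beta(f_u(f_v(q')))$ for every $u$, so $f_v(q) \sim f_v(q')$. Second, if $w$ and $w'$ are permutations of each other then $f_w(q_0) \sim f_{w'}(q_0)$, because for each $v$ the strings $wv$ and $w'v$ are also permutations, so $\A$'s symmetry yields $\beta(f_{wv}(q_0)) = \beta(f_{w'v}(q_0))$. With these I define the combining function by $c([f_u(q_0)], [f_v(q_0)]) := [f_{uv}(q_0)]$ for any representatives $u, v \in \Sigma^+$. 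The main obstacle is checking that $c$ is well-defined on $Q' \times Q'$: replacing $u$ by an equivalent $u'$ is handled by the congruence property applied to $f_v$, while replacing $v$ by an equivalent $v'$ needs both properties, via $[f_{uv}(q_0)] = [f_{vu}(q_0)] = [f_u(f_v(q_0))] = [f_u(f_{v'}(q_0))] = [f_{v'u}(q_0)] = [f_{uv'}(q_0)]$.

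A short induction on $|w|$ then gives $\chi_{\A'}(w) = \{[f_w(q_0)]\}$ for every $w \in \Sigma^+$: the base case is the definition of $\alpha$, and the inductive step is exactly the defining formula for $c$. In particular $\chi_{\A'}(w)$ is always a singleton, so condition \eqref{eq:kobe} holds and $\A'$ is a valid DCA, and $\nu_{\A'}(w) = \beta([f_w(q_0)]) = \nu_\A(w)$ as required. The only subtlety worth flagging is that restricting to states reachable via \emph{nonempty} strings in the definition of $Q'$ is what guarantees every class admits a representative $f_u(q_0)$ with $u \in \Sigma^+$, so that $c$'s defining formula actually covers all of $Q' \times Q'$.
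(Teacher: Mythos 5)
Your proof is correct, and it arrives at essentially the same construction as the paper by a genuinely different route. The paper first replaces $\A$ by an equivalent minimal FSA (citing standard minimization results), proves a Commutativity Lemma --- that in an accessible, distinguishable symmetric FSA the transition functions $f_\sigma$ literally commute --- and then defines $c(q,q') := f_{r[q']}(q)$ for a fixed representative string $r[q']$ of each state, verifying correctness via the identity $f_w = f_{r[f_w(q_0)]}$. Your quotient $Q'$ by Nerode indistinguishability, restricted to states reachable by nonempty strings, is precisely the state space of that minimal automaton, and your $c([f_u(q_0)],[f_v(q_0)]) = [f_{uv}(q_0)] = [f_v(f_u(q_0))]$ is the same combining rule in disguise. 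What differs is the justification: you never need exact commutativity of the $f_\sigma$, only the weaker fact $f_{uv}(q_0) \sim f_{vu}(q_0)$, which follows directly from symmetry without first passing to a distinguishable automaton; and the burden of proof shifts from the paper's key claim to the well-definedness of $c$ on equivalence classes, which you check carefully (congruence in the first argument, congruence plus permutation-invariance in the second). Your version is more self-contained --- it does not invoke minimization as a black box --- at the mild cost of carrying equivalence classes through the argument, whereas the paper's version isolates the structural Commutativity Lemma, which is of independent interest. Your closing remark about defining $Q'$ using only nonempty strings correctly addresses the one point where the argument could otherwise fail, namely the totality of $c$ on $Q' \times Q'$.
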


In the next section, we prove a supporting lemma for later use. In
Section \ref{sec:mainproof} we complete the proof of Theorem
\ref{thm:main}. In Section \ref{sec:ideas} we mention some ideas for
future investigation.

\section{Looking Inside A Symmetric FSA}\label{sec:inside}
The key to Theorem \ref{thm:main} is to focus on
automata with specific irredundany properties. Symmetry of an
automaton is a \emph{black-box property} --- the definition only
cares about the correspondence of final outputs when the inputs are
permutations of one another, regardless of the internal structure of the automaton. We now describe how this black-box property (symmetry),
when combined with irredundancy requirements, implies
a structural property --- namely, that
the transition functions must commute.

\begin{defn}[\cite{HU79}] Let $\A = (\Sigma, Q, q_0, f, O, \beta)$ be an FSA
and let $q \in Q$. The state $q$ is said to be \emph{accessible}
if for some string $w \in \Sigma^*,$ $f_w(q_0) = q$. We say $\A$ is \emph{accessible}
if every state in $Q$ is accessible.
\end{defn}

\begin{defn}[\cite{HU79}] Let $\A = (\Sigma, Q, q_0, f, O, \beta)$ be an FSA
and let $q, q' \in Q$. The states $q$ and $q'$ are said to be
\emph{distinguishable} if for some string $w \in \Sigma^*,$
$\beta(f_w(q)) \neq \beta(f_w(q'))$. We say $\A$ is \emph{distinguishable} if every
pair of states in $Q$ is distinguishable.
\end{defn}

As we will later make precise, every FSA can be rewritten in an accessible, distinguishable way.
This gives some general applicability to the following lemma.

\begin{lemma}[Commutativity Lemma] \label{lemma:ma}Let $\A = (\Sigma, Q, q_0, f, O, \beta)$
be a symmetric FSA that is accessible and distinguishable. Then the functions
$\{f_\sigma\}_{\sigma\in\Sigma}$ commute.
\end{lemma}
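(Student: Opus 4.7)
The plan is to use accessibility to reduce the commutativity claim to a statement about states reachable from $q_0$, and then use symmetry plus distinguishability to upgrade an equality of outputs (guaranteed by symmetry) into an equality of states (needed for $f_\sigma\circ f_\tau = f_\tau\circ f_\sigma$).

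In more detail: fix $\sigma,\tau\in\Sigma$ and $q\in Q$; the goal is $f_\sigma(f_\tau(q)) = f_\tau(f_\sigma(q))$. By accessibility, choose $u\in\Sigma^*$ with $f_u(q_0)=q$. Then for any $v\in\Sigma^*$, the identity $f_{xy}(q_0) = f_y(f_x(q_0))$ gives
\[
f_v\bigl(f_\sigma(f_\tau(q))\bigr) = f_{u\tau\sigma v}(q_0),\qquad f_v\bigl(f_\tau(f_\sigma(q))\bigr) = f_{u\sigma\tau v}(q_0).
\]
The two strings $u\tau\sigma v$ and $u\sigma\tau v$ are permutations of each other (nonempty, since they contain $\sigma\tau$), so symmetry of $\A$ yields $\nu_\A(u\tau\sigma v)=\nu_\A(u\sigma\tau v)$, i.e.\ $\beta$ agrees on the two states displayed above.

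Since this holds for every $v\in\Sigma^*$ (including $v=\lambda$, where $f_v$ is the identity), the states $f_\sigma(f_\tau(q))$ and $f_\tau(f_\sigma(q))$ cannot be distinguished by any continuation. By distinguishability of $\A$, two states that are not distinguishable must be equal, so $f_\sigma(f_\tau(q)) = f_\tau(f_\sigma(q))$. As $q$, $\sigma$, $\tau$ were arbitrary, the transition functions commute.

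The only subtle step is the last one: symmetry by itself only forces equality of the final output bit, so one might worry that the states themselves could differ. The trick is to let $v$ range over all strings, which converts the single output equality into a full sequence of output equalities along every possible continuation; this is exactly the hypothesis that distinguishability needs in order to collapse to equality of states. This is also where both irredundancy assumptions are genuinely used — accessibility to represent an arbitrary $q$ as $f_u(q_0)$, and distinguishability to promote output agreement to state equality.
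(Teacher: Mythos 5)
Your proof is correct and is essentially the paper's argument run in the direct rather than contrapositive direction: the paper assumes two transition functions fail to commute at some $q$ and exhibits a single distinguishing suffix $w_r$ that breaks symmetry, while you assume symmetry, let the suffix $v$ range over all strings, and invoke distinguishability to collapse the two states. The ingredients --- accessibility to write $q=f_u(q_0)$, symmetry applied to the permuted strings, and distinguishability to promote output agreement to state equality --- are identical in both versions.
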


We defer the proof of the lemma to the end of this section. In order to explain how it is useful,
we recall the following additional definitions.

\begin{defn}[\cite{HU79}] Two automata $\A, \A'$ are
\emph{equivalent} if they compute
the same function, i.e.\ if $\nu_{\A} = \nu_{\A'}$. An FSA $\A$ is \emph{minimal} if for every
FSA $\A'$ equivalent to $\A$, $\A'$ has at least as many states as $\A$.\end{defn}

It is not hard to see that any minimal FSA must be accessible (or else we could remove some
states) and distinguishable (or else we could collapse some states)\footnote{Interestingly,
the converse is also true: any accessible, distinguishable FSA is minimal. See \cite{HU79} for
a derivation of this result as a corollary of the Myhill-Nerode theorem;
adapting the proof from accept/reject automata to our more general model
is straightforward.}. It is also not hard
to see that for every FSA $\A$ there exists a minimal equivalent FSA $\A'$; such minimization
can be performed algorithmically in $poly(|Q|,|\Sigma|)$ time, e.g.\ using
an approach of Hopcroft \cite{Hop71}.
In sum, for any FSA we can efficiently obtain an equivalent FSA meeting the conditions of
Lemma \ref{lemma:ma}, which we now prove.

\begin{proof}[Proof of Lemma \ref{lemma:ma}]
Suppose for the sake of contradiction that not all of the functions
$f$ commute. Then $f_{\sigma_1} (f_{\sigma_2} (q)) \neq f_{\sigma_2}
(f_{\sigma_1} (q))$ for some $\sigma_1, \sigma_2 \in \Sigma, q \in
Q.$ We want to show that this discrepancy can be ``continued" to a
violation of symmetry. Let $q_1$ denote $f_{\sigma_2} (f_{\sigma_1}
(q))$ and $q_2$ denote $f_{\sigma_1} (f_{\sigma_2} (q)).$

First, since $q$ is accessible, there exists some string
$w_\ell$ such that $f_{w_\ell}(q_0) = q.$ Second, since $q_1$ and
$q_2$ are distinguishable, there exists some string $w_r$ such that
$\beta(f_{w_r}(q_1)) \neq \beta(f_{w_r}(q_2))$. Now putting things
together we have
$$\beta(f_{w_\ell\sigma_1\sigma_2w_r}(q_0)) = \beta(f_{\sigma_1\sigma_2w_r}(q)) = \beta(f_{w_r}(q_1)).$$
Similarly
$$\beta(f_{w_\ell\sigma_2\sigma_1w_r}(q_0)) = \beta(f_{w_r}(q_2)) \neq \beta(f_{w_r}(q_1)).$$
Hence $\A$ outputs different values under the inputs
$w_\ell\sigma_1\sigma_2w_r$ and $w_\ell\sigma_2\sigma_1w_r$; since
these inputs are permutations of one another, this means $\A$ is not
symmetric.
\end{proof}


\section{Proof of Theorem 7}\label{sec:mainproof}
We are given that $\A = (\Sigma, Q, q_0, f, O, \beta)$ is a
symmetric FSA and without loss of generality it is minimal.
For each $q \in Q,$ define $r[q] \in
\Sigma^*$ to be a fixed \emph{representative string} that generates
state $q$ from $q_0$, i.e., such that
$$f_{r[q]}(q_0)=q$$ holds. Each $r[q]$ is guaranteed to exist since $q$
is accessible. These $r[q]$ remain fixed for the remainder of
the proof.

We need the following claim, which roughly says that every string
$w$ is interchangeable with the representative string $r[f_w(q_0)]$.
We know they are interchangeable when they are read first, but using
the commutativity of the $f$'s, we can show they are interchangeable
when read later.
\begin{claim} \label{claim:key} For each $w \in \Sigma^*$ we have $f_w = f_{r[f_w(q_0)]}.$
\end{claim}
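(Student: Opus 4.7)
The plan is to show $f_w(q') = f_{r[f_w(q_0)]}(q')$ for every state $q' \in Q$, leveraging two tools the previous section has given us: accessibility of $\A$ (so every state is reachable from $q_0$), and the Commutativity Lemma (so every pair of transition functions $f_{\sigma_1}, f_{\sigma_2}$ commute, and hence by induction $f_u \circ f_v = f_v \circ f_u$ for \emph{any} two strings $u, v \in \Sigma^*$).

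First, write $q := f_w(q_0)$, so by the defining property of the representative string we automatically have $f_{r[q]}(q_0) = q = f_w(q_0)$. This is exactly the identity we want, evaluated at the single point $q_0$. The task is therefore to propagate this equality from $q_0$ to an arbitrary state $q'$.

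To do so, use accessibility: pick any string $s \in \Sigma^*$ with $f_s(q_0) = q'$ (if $q' = q_0$, take $s = \lambda$). Then I would compute
$$f_w(q') \;=\; f_w(f_s(q_0)) \;=\; f_s(f_w(q_0)) \;=\; f_s(q) \;=\; f_s(f_{r[q]}(q_0)) \;=\; f_{r[q]}(f_s(q_0)) \;=\; f_{r[q]}(q'),$$
where the second and second-to-last equalities are the instances $f_w \circ f_s = f_s \circ f_w$ and $f_{r[q]} \circ f_s = f_s \circ f_{r[q]}$ of string-level commutativity.

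The only nontrivial step is the promotion of character-level commutativity (provided by Lemma~\ref{lemma:ma}) to string-level commutativity, and I expect this to be the main (though still routine) point to verify. It follows by a straightforward induction on $|u|+|v|$: peeling off one character $\sigma$ from $u$, one uses commutativity of $f_\sigma$ with each character of $v$ to slide $f_\sigma$ past $f_v$, then invokes the inductive hypothesis on the shorter remainder. With that in hand, the displayed chain of equalities completes the proof of the claim.
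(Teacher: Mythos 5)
Your proposal is correct and is essentially the paper's own argument: evaluate both sides at an arbitrary state, write that state as $f_s(q_0)$ via accessibility, and shuttle $f_w$ and $f_{r[f_w(q_0)]}$ past $f_s$ using commutativity together with the defining property $f_{r[f_w(q_0)]}(q_0)=f_w(q_0)$. The only cosmetic differences are that the paper uses the fixed representative $r[q']$ where you use an arbitrary $s$ with $f_s(q_0)=q'$, and that you explicitly flag the (routine) induction lifting character-level commutativity from Lemma~\ref{lemma:ma} to string-level commutativity, which the paper leaves implicit.
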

\begin{proof}
For any $q \in Q,$ alternately applying the definition of $r[\cdot]$
and the commutativity of the $f$'s, we have
\begin{align*} f_w(q) &= f_{w}  (f_{r[q]}(q_0)) = f_{r[q]} (f_{w}
(q_0)) \\&= f_{r[q]} (f_{r[f_w (q_0)]} (q_0)) = f_{r[f_w(q_0)]}
(f_{r[q]} (q_0)) = f_{r[f_w (q_0)]} (q). \tag*{\qedhere}
\end{align*}\end{proof}

\subsection{The Construction}
Here we define the divide-and-conquer automaton $\A' = (\Sigma, Q',
\alpha, c, O, \beta')$. Namely, let $Q' = Q$, $\beta' = \beta$,
define $\alpha(\sigma) := f_\sigma(q_0)$ and define $c(q, q') :=
f_{r[q']}(q).$ It remains to prove that the construction is correct,
i.e., that $\nu_\A = \nu_{\A'}.$ Our recursive proof uses the idea
outlined previously, that each string $w$ is essentially
interchangeable with $r[f_w(q_0)].$

\begin{claim}\label{claim:final}
For any nonempty string $w \in \Sigma^+,$ the set $\chi_{\A'}(w)$ is
a singleton and $\tmo(\chi_{\A'}(w)) = f_w(q_0)$.
\end{claim}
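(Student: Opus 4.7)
The plan is to prove Claim \ref{claim:final} by strong induction on $|w|$, showing the stronger statement that $\chi_{\A'}(w) = \{f_w(q_0)\}$. Singletonness and the identification of the unique element will then fall out together, and they are exactly what is needed to finish the proof of Theorem \ref{thm:main} (combined with $\beta' = \beta$, the DCA condition \eqref{eq:kobe} is satisfied and $\nu_{\A'}(w) = \beta(f_w(q_0)) = \nu_\A(w)$).

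For the base case $|w|=1$, write $w = \sigma$. By definition of $\chi_{\A'}$ on length-one strings and of $\alpha$, we have $\chi_{\A'}(\sigma) = \{\alpha(\sigma)\} = \{f_\sigma(q_0)\} = \{f_w(q_0)\}$, so the base case is immediate.

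For the inductive step $|w| \geq 2$, consider any split $w = uv$ with $u, v \in \Sigma^+$. By the induction hypothesis applied to $u$ and $v$, we have $\chi_{\A'}(u) = \{f_u(q_0)\}$ and $\chi_{\A'}(v) = \{f_v(q_0)\}$. Hence the contribution of this particular split to the union \eqref{eq:dca} is the single element
\[ c(f_u(q_0),\, f_v(q_0)) = f_{r[f_v(q_0)]}(f_u(q_0)), \]
by the definition of $c$. Now invoke Claim \ref{claim:key} with the string $v$: it gives $f_{r[f_v(q_0)]} = f_v$, so the displayed element equals $f_v(f_u(q_0)) = f_{uv}(q_0) = f_w(q_0)$. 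Since this value does not depend on the chosen split $(u,v)$, the union in \eqref{eq:dca} collapses to the single element $\{f_w(q_0)\}$, completing the induction.

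The argument is a clean induction whose only nontrivial ingredient is Claim \ref{claim:key}; that claim (itself a consequence of the Commutativity Lemma) is what makes the combining function $c(q,q') = f_{r[q']}(q)$ agree with genuine sequential execution, so the only thing one has to be careful about is invoking it on $v$ (rather than $u$) in the correct direction. Once that is done, the split-independence is automatic and no further case analysis over partitions is needed.
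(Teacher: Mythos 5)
Your proof is correct and follows essentially the same route as the paper's: induction on $|w|$, reducing each split $w=uv$ via the induction hypothesis and the definition of $c$ to $f_{r[f_v(q_0)]}(f_u(q_0))$, then applying Claim \ref{claim:key} to $v$ to get $f_v(f_u(q_0)) = f_w(q_0)$ independently of the split. No substantive differences to report.
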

\begin{proof}
We proceed by induction on $|w|$.
\begin{description}
\item{Base case:} If $w$ has length 1, say it consists of the character $\sigma$, then $f_w(q_0) = f_\sigma(q_0)$, and by the definition of $\chi$, we have $\chi_{\A'}(w) =  \{\alpha(\sigma)\} = \{f_\sigma(q_0)\}.$ Thus the claim is satisfied.
\item{Inductive step:} Now $w$ has length 2 or more. The induction statement to be proved is $\chi_{\A'}(w) = \{f_w(q_0)\}.$ Recalling Equation \eqref{eq:dca}, which defines $\chi$ in this case, this is equivalent to saying that
    \begin{align}
    \mbox{for all partitions $w = uv$ of $w$ into two nonempty substrings,} \notag \\ c(\tmo(\chi_{\A'}(u)), \tmo(\chi_{\A'}(v))) = f_w(q_0). \label{eq:foo}
    \end{align}
    By the induction hypothesis, the left-hand side of \eqref{eq:foo} is equal to
    \begin{equation}c(f_u(q_0), f_v(q_0)).\label{eq:foor}\end{equation}
    Applying the definition of $c$, we find that the value \eqref{eq:foor} is in turn equal to
    $f_{r[f_v(q_0)]}(f_u(q_0)).$ Finally, applying Claim \ref{claim:key} we see that
    the value \eqref{eq:foor} is equal to $f_v(f_u(q_0)) = f_w(q_0)$, as desired. \hfill \qedhere
\end{description}
\end{proof}

\begin{proof}[Proof of Theorem \ref{thm:main}]
As outlined previously, minimizing $\A$ makes it accessible and distinguishable,
without changing $\nu_\A$. Now consider
the DCA $\A'$ as defined previously. On any input $w \in \Sigma^+$,
using Claim \ref{claim:final},
$$\nu_{\A'}(w) = \beta(\tmo(\chi_{\A'}(w))) = \beta(f_w(q_0)) = \nu_{\A}(w). $$
Hence $\A$ and $\A'$ compute the same function (i.e., they are equivalent).

Since the state space of $\A'$ is $Q$, and since $Q$ could only have
gotten smaller when $\A$ was minimized,
the state space of the DCA $\A'$ is indeed no larger than the state space of
the original FSA.
\end{proof}

One might question whether any result similar to
Theorem \ref{thm:main} is possible if we discard the symmetry
requirement. The following result gives a negative answer to this question
and shows that the exponential state space increase
of Theorem \ref{thm:compose} is best possible.
\begin{prop}For any integer $n \geq 1$, there is an $n$-state FSA
$\mathcal{A}$ on a three-symbol alphabet $\Sigma$
so that any DCA equivalent to $\A$ has at
least $n^n$ states.
\end{prop}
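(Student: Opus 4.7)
The plan is to exhibit an $n$-state FSA $\A$ on a 3-symbol alphabet $\Sigma=\{a,b,d\}$ whose transition monoid under composition is the full transformation monoid $T_n$ of all $n^n$ self-maps of $Q=\{1,\dots,n\}$, and then show that any equivalent DCA $\A'$ must represent each element of $T_n$ by a distinct state of $Q'$. For $n\ge 3$ take $Q=\{1,\dots,n\}$, $q_0=1$, $O=Q$, $\beta=\mathrm{id}_Q$, $f_a=(1\,2)$, $f_b=(1\,2\,\cdots\,n)$, and $f_d$ any fixed non-permutation (e.g.\ $f_d(2)=1$, $f_d(i)=i$ otherwise). A classical fact from the theory of transformation semigroups says $\langle f_a,f_b\rangle=S_n$ and that $S_n$ together with any non-injective self-map generates $T_n$; the cases $n\le 2$ are handled by direct ad hoc constructions. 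Note that $\A$ is accessible via iterates of $f_b$, and that for every $g\in T_n$ there is a nonempty word $w_g\in\Sigma^+$ with $f_{w_g}=g$.

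Now let $\A'=(\Sigma,Q',\alpha,c,O,\beta')$ be any DCA equivalent to $\A$. Define the canonical left-deep evaluation $h:\Sigma^+\to Q'$ by $h(\sigma):=\alpha(\sigma)$ and $h(u\sigma):=c(h(u),\alpha(\sigma))$ for $u\in\Sigma^+$, $\sigma\in\Sigma$. An easy induction on $|w|$ using \eqref{eq:dca} shows $h(w)\in\chi_{\A'}(w)$ for all $w$, so by \eqref{eq:kobe} and \eqref{eq:qe} we have $\beta'(h(w))=\nu_{\A'}(w)$. Moreover, for any nonempty $x,w$, the partition $(x,w)$ is one of the terms in the union in \eqref{eq:dca}, so $c(h(x),h(w))\in\chi_{\A'}(xw)$, and hence $\beta'(c(h(x),h(w)))=\nu_{\A'}(xw)$.

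The crux is to show the $n^n$ states $\{h(w_g):g\in T_n\}\subseteq Q'$ are pairwise distinct, which forces $|Q'|\ge n^n$. Suppose toward contradiction that $h(w_g)=h(w_{g'})$ for some distinct $g,g'\in T_n$, and pick $q\in Q$ with $g(q)\ne g'(q)$. If $q=q_0$, then $\nu_{\A'}(w_g)=\beta'(h(w_g))=\beta'(h(w_{g'}))=\nu_{\A'}(w_{g'})$, while $\nu_\A(w_g)=g(q_0)\ne g'(q_0)=\nu_\A(w_{g'})$, contradicting $\nu_\A=\nu_{\A'}$. If instead $q\ne q_0$, accessibility provides a nonempty $x$ with $f_x(q_0)=q$; then
$\nu_{\A'}(xw_g)=\beta'(c(h(x),h(w_g)))=\beta'(c(h(x),h(w_{g'})))=\nu_{\A'}(xw_{g'})$,
yet $\nu_\A(xw_g)=g(q)\ne g'(q)=\nu_\A(xw_{g'})$, again a contradiction.

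The main obstacle is the semigroup-theoretic input that three carefully chosen transition functions are enough to generate all of $T_n$. I would either cite the standard result from transformation semigroup theory or argue directly, noting that $\{\pi f_d\pi':\pi,\pi'\in S_n\}$ already exhausts the rank-$(n-1)$ self-maps of $Q$ and that compositions of such maps reach every lower rank, while the two-generator presentation of $S_n$ by a transposition and an $n$-cycle is classical.
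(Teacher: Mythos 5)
Your proposal is correct and follows essentially the same route as the paper's proof: realize the full transformation monoid $Q^Q$ with three generators (the paper cites D\'enes for the size-3 generating set where you give the classical explicit generators), and then show that two distinct functions $g,g'$ cannot share a representative state in $Q'$ by prepending a string that carries $q_0$ to a point where $g$ and $g'$ differ, contradicting equivalence via Equations \eqref{eq:kobe} and \eqref{eq:qe}. The only cosmetic difference is that you phrase the counting step as injectivity of $g\mapsto h(w_g)$ for a fixed left-deep evaluation, whereas the paper runs the same argument as a pigeonhole on intersecting $\chi_{\A'}$-sets.
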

\begin{proof}
Let $Q$ be a set of $n$ states and $\Sigma$ a set of size 3.
D\'{e}nes \cite{Denes66} showed that $Q^Q$,
viewed as a semigroup under the operation of composition, has a generating
set of size 3. We choose $\{f_\sigma\}_{\sigma \in \Sigma}$ to be this
generating set; this implies that for every function $g:Q \to Q$, there
is a string $w[g] \in \Sigma^*$ so that $f_{w[g]} = g$. We define
$O=Q$, $\beta$ to be the identity function, and we choose $q_0 \in Q$
arbitrarily; this completes the definition of the FSA $\mathcal{A}$.

Suppose for the sake of contradiction that there exists a DCA
$\mathcal{A}'$ that computes $\nu_\mathcal{A}$, and that this DCA's state
space $Q'$ has $|Q'| < n^n$. By the pigeonhole principle
there are two distinct functions $g_1, g_2 \in Q^Q$ so that
$\chi_{\mathcal{A}'}(w[g_1]) \cap \chi_{\mathcal{A}'}(w[g_2]) \neq
\emptyset$,
since each $\chi_{\mathcal{A}'}(\cdot)$ is a nonempty subset
of $Q'$. Let $\hat{q} \in Q$ denote a state for which
$g_1(\hat{q}) \neq g_2(\hat{q})$ and let $q' \in Q'$ denote any element
of $\bigcap_{i=1,2} \chi_{\mathcal{A}'}(w[g_i])$.

Now let $h : Q \to Q$ be any function for which $h(q_0)=\hat{q}$. We claim
that the two input strings $w[h]w[g_i]$ for $i=1,2$ produce
different outputs
under $\mathcal{A}$ and the same output under $\mathcal{A}'$, providing
the contradiction. To see that the outputs under $\A$ are different,
observe that
$$\nu_\A(w[h]w[g_i]) = \beta(f_{w[h]w[g_i]}(q_0)) = \beta(g_i(h(q_0))) =
g_i(\hat{q})$$
and since $g_1(\hat{q}) \neq g_2(\hat{q})$, we are done. To see that
the outputs under $\A'$ are the same,
let $\overline{q}'$ denote any element of $\chi_{\mathcal{A}'}(w[h])$
and notice that $c(\overline{q}', q') \in \chi_{\mathcal{A}'}(w[h]w[g_i])$
for $i=1,2$; then recalling Equations \eqref{eq:kobe} and \eqref{eq:qe},
we see that
$\nu_{\mathcal{A}'}(w[h]w[g_1]) = \nu_{\mathcal{A}'}(w[h]w[g_2])$ as
claimed.
\end{proof}

\section{Extensions}\label{sec:ideas}
We mention in this sections some extensions of FSAs and ask if
analogues of Theorem \ref{thm:main} hold for them. Some of these
issues were raised previously in \cite{PV06}.

First, the main result of this paper is not suitable in the following
natural situation.
Suppose
the input alphabet and state space are both the set of all $k$-bit
binary strings, i.e.\ $\Sigma = Q = \{0,1\}^k$, and that
the transition function
$f_\sigma(q)$ is some polynomial-time Turing-computable function of
$\sigma$ and $q$ (and similarly for $\beta$). For such an FSA,
$\nu_\A(w)$ can be computed in $|w|\cdot poly(k)$
time. If $\A$ is symmetric we can simulate it by a DCA using
Theorem \ref{thm:main} but this approach takes exponential
time in $k$, since minimizing $\A$ requires looking at all of its
$2^k$ states. Functional composition (Theorem \ref{thm:compose})
has the same issue. Thus, the open problem is to determine if a $poly(k)$-time
 technique exists
to convert a symmetric FSA of this type into a DCA.

Second, a variant of the above model might allow the string lengths
to grow as some function $k(m)$ of the total number of inputs $m$.
Since the original submission of this paper
and independently of our work, Feldman et al.\ \cite{FM+08} showed that for this
sort of model, an analogue of Theorem \ref{thm:main} holds
where the divide-and-conquer version uses strings of length at most
$k^2(m)$.
Their construction, like ours, takes exponential time in $k(m)$.

Finally, the functional composition view of FSAs (e.g., in the proof
of Theorem \ref{thm:compose}) also works for nondeterministic
automata
and probabilistic automata.
A result obtained by Feldman et al.\ \cite{FM+08} shows
that an analogue of Theorem \ref{thm:main} for
probabilistic automata is false,
while the nondeterministic version appears to be an open problem.

\bibliography{../huge}
\bibliographystyle{jca}

\typeout{Label(s) may have changed. Rerun}

\end{document}